\newcommand\version{Feb 11, 2019}
\newtheorem{theorem}{Theorem}
\newtheorem{lemma}{Lemma}
\theoremstyle{definition}
\theoremstyle{remark}
\numberwithin{equation}{section}
\renewcommand{\epsilon}{\varepsilon}
\newcommand{\F}{\mathcal{F}}
\newcommand{\N}{\mathbb{N}}
\newcommand{\R}{\mathbb{R}}
\DeclareMathOperator{\infspec}{inf spec}
\begin{document}

\title[Polaron mass at strong coupling --- \version]{Divergence of the effective mass of a polaron  in the strong coupling limit}

\author{Elliott H. Lieb}
\address{(E.H. Lieb) Departments of Mathematics and Physics, Princeton University, Princeton, NJ 08544, USA}
\email{lieb@princeton.edu}

\author{Robert Seiringer}
\address{(R. Seiringer) IST Austria, Am Campus 1, 3400 Klosterneuburg, Austria}
\email{rseiring@ist.ac.at}

\thanks{\copyright\, 2019 by the authors. This paper may be  
reproduced, in
its entirety, for non-commercial purposes.}

\begin{abstract} We consider the Fr\"ohlich model of a polaron, and show that its effective mass diverges in the strong coupling limit.
\end{abstract}

\date{\version}

\maketitle

\section{Introduction and main result}
The polaron model introduced by Fr\"ohlich  \cite{Fr} represents a simple and well-studied model of an electron interacting with the quantized optical modes of a polar crystal. We refer to    
 \cite{devreese,FLST,GL,Mo,spohn} for properties, results and further references. To this date, the asymptotic behavior of its effective mass for strong coupling represents an outstanding open problem. According to Landau and Pekar \cite{pekar}, it is expected to diverge as $\alpha^4$ for large coupling constant $\alpha$, with a prefactor determined by the minimizer of the Pekar functional, see Eqs.~\eqref{def:EP} and~\eqref{m:exp} below. While we are not able to verify this conjecture, we shall prove in this paper that the effective mass indeed diverges to infinity as $\alpha\to \infty$. 

For fixed total momentum $P\in \R^3$, the Hamiltonian of the Fr\"ohlich model is given by \cite{LLP,Mo} 
\begin{equation}
H_P = (P- P_f)^2 + V + \N
\end{equation}
where
\begin{equation}\label{def:V}
V = - \frac{\sqrt{\alpha}}{\sqrt{2}\pi}  \int_{\R^3} dk\,  \frac 1{|k|} \left( a_k + a^\dagger_k \right) \,,
\end{equation}
$\N = \int_{\R^3} dk\, a^\dagger_k a_k$ denotes the number operator,  $P_f= \int_{\R^3} dk\, k \, a^\dagger_k a_k$ the field momentum, and $\alpha>0$ is a coupling constant. 
The Hamiltonians $H_P$ act on the Hilbert space $\mathcal{F}$, the bosonic Fock space over $L^2(\R^3)$. The creation and annihilation operators satisfy the usual canonical commutation relations $[a_k,a^\dagger_{l}] = \delta(k-l)$. 

We denote $E_P = \infspec H_P$. It is well-known that $\min_P E_P = E_0$ \cite{GL}, and that
\begin{equation}\label{plim}
\lim_{\alpha\to \infty} \alpha^{-2} E_0 = e^{\rm P}\,,
\end{equation}
with $e^{\rm P}$ the Pekar energy
\begin{equation}\label{def:EP}
e^{\rm P} = \min_\psi \left\{ \int_{\R^3} dx\, |\nabla\psi(x)|^2 - \iint\nolimits_{\R^3\times\R^3} dx\, dy\, \frac{ |\psi(x)|^2 |\psi(y)|^2 }{|x-y|} \, : \, \int_{\R^3}  |\psi|^2 =1 \right\}\,.
\end{equation}
This was proved in \cite{DoVa} using the path-integral formulation of the problem (see also  \cite{MV1,MV2} for recent work on the construction of the Pekar process \cite{spohn}), and quantitative bounds were given later in \cite{LT} using operator methods, which will play an important role also in this work. 

The effective mass $m$ of the polaron is defined via
\begin{equation}
E_P = E_0 + \frac {P^2}{ 2m }  +  o(P^2)
\end{equation}
as $P\to 0$. It satisfies $m\geq 1/2$, which is the bare mass of the electron in our units. In fact, $m>1/2$ for $\alpha>0$. Our goal is to prove

\begin{theorem}\label{main:thm}
The effective mass of the polaron satisfies
\begin{equation}
\lim_{\alpha\to \infty} m =  \infty\,.
\end{equation}
\end{theorem}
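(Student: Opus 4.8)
The plan is to express the effective mass through a ground-state correlation function of the fiber Hamiltonian at zero momentum, $H_0 := P_f^2 + V + \N$, and then to study that correlation as $\alpha \to \infty$.

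First I would use the algebraic identity $H_P = H_0 - 2\,P\cdot P_f + |P|^2$, which exhibits the $P$-dependence as an analytic perturbation of type (A): $P_f$ is $H_0$-bounded with relative bound zero, and $E_0$ is a simple eigenvalue of $H_0$, isolated below the essential spectrum (which begins at $E_0+1$). Consequently $E_P$ is real-analytic near $P=0$, and since $\Psi_0$, the normalized ground state of $H_0$, may be chosen reflection symmetric so that $\langle P_f\rangle_0 = 0$, the inverse mass is given exactly by second-order perturbation theory (two applications of the Feynman--Hellmann relation). Writing $(P_f)_1$ for a Cartesian component and using rotational symmetry, this gives
\[
\frac{1}{2m} \;=\; 1 - 4\,\big\langle \Psi_0,\, (P_f)_1\,(H_0-E_0)^{-1}(P_f)_1\,\Psi_0\big\rangle .
\]
Theorem~\ref{main:thm} is therefore equivalent to the statement that $C(\alpha) := \langle \Psi_0, (P_f)_1 (H_0-E_0)^{-1}(P_f)_1\Psi_0\rangle \to \tfrac14$ as $\alpha\to\infty$.

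The inequality $E_P \ge E_0$ forces $1/(2m)\ge 0$, hence $C(\alpha)\le \tfrac14$ for free; the entire content is the matching lower bound $C(\alpha)\ge \tfrac14 - o(1)$. I would establish it from the resolvent variational principle
\[
\big\langle \phi,\,(H_0-E_0)^{-1}\phi\big\rangle \;\ge\; \frac{|\langle \phi,\Xi\rangle|^2}{\langle \Xi,\,(H_0-E_0)\,\Xi\rangle}, \qquad \phi := (P_f)_1\Psi_0 \perp \Psi_0,
\]
with a trial vector $\Xi$ tailored to the soft collective response of the self-trapped polaron. In the strong-coupling regime $\Psi_0$ is, after rescaling, close to a Pekar product state --- a localized electron density dressed by the coherent phonon field minimizing \eqref{def:EP} --- and $\Xi$ should be built from the infinitesimal generator of the (classically broken) translational symmetry of that profile, so that $\langle \phi,\Xi\rangle$ is large while $\langle \Xi,(H_0-E_0)\Xi\rangle$ stays controlled. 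The quantitative strong-coupling estimates of \cite{LT} for $\alpha^{-2}E_0 \to e^{\rm P}$, together with the attendant control on the approximate minimizer, are the tools I would use to evaluate numerator and denominator and to drive the ratio to $\tfrac14$; morally, this exhibits the inertia of the classical polarization cloud as a lower bound for $m$.

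I expect the decisive difficulty to be exactly this last point. Crude spectral sum rules are insufficient: since $\|\phi\|^2 = \langle (P_f)_1^2\rangle_0$ equals one third of the electronic kinetic energy and therefore grows without bound (like $\alpha^2$), while the spectral weight of $\phi$ with respect to $H_0-E_0$ spreads out to energies of order $\alpha^2$, the bare Cauchy--Schwarz bound $C(\alpha)\ge \|\phi\|^4/\langle\phi,(H_0-E_0)\phi\rangle$ yields only a constant below $\tfrac14$, i.e. a finite lower bound on $m$ rather than divergence. Capturing the exact limiting constant requires resolving the full spectral distribution of $(P_f)_1\Psi_0$ and isolating the emergent soft collective mode, which is where the operator methods of \cite{LT} must do genuine work. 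A further structural constraint is that all estimates have to be anchored at the true pair $(E_0,\Psi_0)$: any use of a merely variational state would introduce an energy offset $\mathcal E(0)-E_0>0$ that, upon projection onto fixed momentum, contaminates the extraction of the quadratic coefficient in $P$ and destroys the bound on $1/(2m)$. It is presumably for this reason that the argument can establish divergence while the conjectured $\alpha^4$ growth remains out of reach.
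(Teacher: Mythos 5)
Your reduction via second-order perturbation theory is sound, and structurally it is the paper's argument in disguise: the formula $\tfrac{1}{2m} = 1 - 4\langle \Psi_0, (P_f)_1 (H_0-E_0)^{-1}(P_f)_1\Psi_0\rangle$ combined with the resolvent variational principle for a trial vector $\Xi$ is exactly dual to the paper's direct trial state $\phi_P = \phi_0 - \alpha^{-1}P\cdot t(P_f/\alpha)\phi_0$, with $\Xi$ corresponding to $P\cdot t(P_f/\alpha)\phi_0$. Your instinct that $\Xi$ should be the soft mode generated by the broken translational symmetry of the Pekar profile is also precisely the paper's choice $t(p) = \chi(\epsilon p)\,\nabla\widehat\psi^{\rm P}(p)/\widehat\psi^{\rm P}(p)$ (note the regularization $\chi(\epsilon p)$, needed because boundedness of $\nabla\widehat\psi^{\rm P}/\widehat\psi^{\rm P}$ is not known, with $\epsilon \to 0$ taken only after $\alpha\to\infty$), and the Euler--Lagrange equation for the Pekar minimizer is what makes the numerator and denominator combine to yield exactly the limiting constant $\tfrac14$ in your normalization.

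The genuine gap sits exactly where you flagged ``the decisive difficulty,'' and it cannot be closed by citing \cite{LT} as it stands: that work proves only the energy asymptotics \eqref{plim} and supplies no ``attendant control on the approximate minimizer'' --- there is no known norm convergence, nor even expectation-value convergence, of the true ground state $\phi_0$ to a Pekar product state. Consequently you have no way to evaluate either $\langle\phi,\Xi\rangle$, which requires $\lim_\alpha \alpha^{-1}\langle \phi_0| P_f\cdot t(P_f/\alpha)|\phi_0\rangle$, or $\langle\Xi,(H_0-E_0)\Xi\rangle$, which in addition requires strong-coupling limits of $\langle \phi_0|\N\, g(P_f/\alpha)|\phi_0\rangle$ and of matrix elements of $a^\dagger(\xi_\alpha)$ arising from $V$ (after an ultraviolet cutoff as in the Erratum of \cite{LT}, since $|k|^{-1}\notin L^2(\R^3)$). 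The paper's entire technical content, Lemma~\ref{lem1}, is devoted to manufacturing precisely this input: one perturbs $H_0$ by $\lambda_1\alpha^2 g_1(P_f/\alpha)$, $\lambda_2 g_2(P_f/\alpha)\N$ and a $g_3\, a^\dagger(\xi_\alpha)\, g_3$ term, redoes the full Lieb--Thomas lower-bound machinery (ultraviolet cutoff, electron localization, coherent-state comparison) for the \emph{perturbed} Hamiltonian to obtain a matching perturbed Pekar functional, and then exploits concavity of $E_0(\vec\lambda)$ together with differentiability of the limiting $E^{\rm P}(\vec\lambda)$ at $\vec\lambda=\vec 0$ (which rests on the uniqueness of the Pekar minimizer \cite{Li}) to upgrade one-sided energy bounds into existence and identification of the limits of the ground-state expectations --- a Feynman--Hellmann-type argument. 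Without this step, or a substitute for it, your numerator and denominator cannot be computed and the plan stalls; with it, your route and the paper's coincide.
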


According to  \cite{pekar} (see also   \cite{devreese} and \cite{spohn}) the polaron mass is expected to satisfy
\begin{equation}\label{m:exp}
\lim_{\alpha\to\infty} \alpha^{-4} m 
= \frac {8\pi }3 \int_{\R^3} dx\, |\psi^{\rm P}(x)|^4 
\end{equation}
where $\psi^{\rm P}$ denotes the minimizer of the Pekar functional in \eqref{def:EP}. The latter is unique up to translations and multiplication by a complex phase \cite{Li}. While our result is far from showing \eqref{m:exp}, it gives for the first time a lower bound on $m$ that diverges as $\alpha\to \infty$.


To prove Theorem~\ref{main:thm}, we shall compute an upper bound on $E_P - E_0$. The choice of trial state is motivated by the following observation. In the strong coupling limit, we expect \cite{LS14,nagy} the  ground states $\phi_P$ of $H_P$ to be approximately of the form
\begin{equation}\label{ansatz}
\phi_P \, \propto \,  \widehat\psi_\alpha^{\rm P} ( P-P_f  ) e^{a^\dagger(\varphi^{\rm P}_\alpha)} \Omega
\end{equation}
where $\Omega\in \mathcal{F}$ denotes the Fock space vacuum,  $\widehat\psi^{\rm P}_\alpha(p) = \alpha^{-3/2} \widehat\psi^{\rm P}(\alpha^{-1} p) $  is the Fourier transform of a minimizer of the Pekar functional in \eqref{def:EP} with coupling constant $\alpha$ inserted in front of the second term, and $\varphi^{\rm P}_\alpha $ is the corresponding Pekar field function in momentum space, given by 
\begin{equation}\label{def:varphi}
\varphi^{\rm P}_\alpha (p) = \frac{\sqrt{\alpha}}{\sqrt{2}\pi |p|}  \int_{\R^3} dx \, |\psi_\alpha^P(x)|^2 e^{-i p \cdot x}  \,.
\end{equation}
Moreover, $a^\dagger(\varphi^{\rm P}_\alpha)$ is short for
\begin{equation}\label{adnot}
a^\dagger(\varphi^{\rm P}_\alpha) = \int_{\R^3} dk\, \varphi^{\rm P}_\alpha(k) a^\dagger_k \,,
\end{equation}
hence $e^{a^\dagger(\varphi^{\rm P}_\alpha)} \Omega$ is proportional to the coherent state whose expectation of $a_k$ gives $\varphi^{\rm P}_\alpha(k)$. 

In particular, 
we expect that $\phi_P \approx \widehat\psi^{\rm P}_\alpha(P-P_f)/ \widehat\psi_\alpha^{\rm P}(-P_f) \phi_0$, which to leading order in $P$ reads 
\begin{equation}\label{compe}
\phi_P \approx \phi_0 + P\cdot \frac{\nabla \widehat\psi_\alpha^{\rm P}(-P_f) }{\widehat\psi_\alpha^{\rm P}(-P_f)} \phi_0\,.
\end{equation}  
Our actual choice of trial state will be slightly modified, since we do not know whether the function  $p\mapsto \nabla \widehat\psi_\alpha^{\rm P}(-p) /\widehat\psi_\alpha^{\rm P}(-p)$ is bounded, and hence we will use a regularized version of it.

Our method of proof is in principle quantitative, i.e., gives a  lower bound on the effective mass $m$, except for the regularization just mentioned. If one can show that $p\mapsto \nabla \widehat\psi_\alpha^{\rm P}(-p) /\widehat\psi_\alpha^{\rm P}(-p)$  is a bounded function (or get a control on its possible divergence at infinity), one obtains an explicit lower bound on the rate of divergence of $m$ as $\alpha\to \infty$.  Due to the rather crude energy estimates involved, the lower bound is at best of order $\alpha^{1/10}$, however.  This is far from the expected $\alpha^4$ in \eqref{m:exp}. 

In the remainder of this paper we shall give the proof of Theorem~\ref{main:thm}.

\section{Proof of Theorem~\ref{main:thm}}

Let  $\phi_0\in \mathcal{F}$ denote the normalized ground state of $H_0$. Existence and uniqueness of $\phi_0$ are shown in 
\cite{Mo}.\footnote{Strictly speaking, the results in \cite{Mo} apply only to the model with an ultraviolet cutoff. The latter can be removed by a suitable limit, as explained in detail in  \cite{GW}.} Let $t:\R^3 \to \R^3$ be smooth and compactly supported,  and of the form $t(p) = p h(p)$ with $h$ a radial function. 
We take as  trial function for $E_P = \infspec H_P$ a function of the form
\begin{equation}
\phi_P = \phi_0 - \alpha^{-1}P \cdot t(P_f/\alpha) \phi_0 \,.
\end{equation}
Using rotation invariance of $\phi_0$, we see that the norm of $\phi_P$ equals
\begin{equation}\label{norm}
\| \phi_P\|^2 = 1 +  \frac {P^2} {3\alpha^2} \left\langle\phi_0 \left|  |t(P_f/\alpha)|^2 \right| \phi_0\right\rangle\,.
\end{equation}
Moreover,  since $H_0 \phi_0 = E_0 \phi_0$, we also have
\begin{align}\nonumber
\left\langle \phi_P \left| H_P \right| \phi_P\right\rangle & = E_0 + P^2 + \alpha^{-2} \left\langle P\cdot  t(P_f/\alpha)  \phi_0 \left| H_0 \right| P\cdot t(P_f/\alpha) \phi_0 \right\rangle \\ &  \quad +4 \alpha^{-1} \left \langle \phi_0 \left| P\cdot P_f \right| P\cdot t(P_f/\alpha) \phi_0 \right\rangle + o(P^2) 
\end{align}
for small $|P|$. 
In particular, in combination with the norm \eqref{norm} above, we obtain for the inverse effective mass
\begin{align}\nonumber
\frac 1 {2m} & \leq \lim_{P\to 0} \frac 1{P^2} \left( \frac {  \left\langle \phi_P \left| H_P \right| \phi_P\right\rangle} {\| \phi_P\|^2} - E_0 \right) 
\\  & \leq 1 +  \frac 1{3\alpha^2}  \left\langle t(P_f/\alpha) \phi_0 \left| H_0 - E_0 \right|  t(P_f/\alpha)  \phi_0 \right\rangle  + \frac 4 {3\alpha}   \left \langle \phi_0 \left| P_f \cdot  t(P_f/\alpha) \right| \phi_0 \right\rangle \label{iem}
\end{align}
where we used again the rotation invariance of $\phi_0$. 

Our goal is to find a function $t$ such that 
 the right side of the above inequality goes to zero as $\alpha\to \infty$. To be precise, we shall find, for any $\delta>0$, a function $t$ such that the limit of the right side of \eqref{iem} is smaller than $\delta$, which is sufficient for our purpose. The following lemma, characterizing properties of the ground state of $H_0$ in the strong coupling limit,  will turn out to be essential. 

Let $\psi^{\rm P}$ be a minimizer of the Pekar functional in \eqref{def:EP}. As shown in \cite{Li}, it is unique up to translations and multiplication by a complex phase factor. We choose the phase factor such that $\psi^{\rm P}$ is non-negative, and translate the function to be rotation-invariant about the origin. Under these conditions, $\psi^{\rm P}$ is indeed unique. Let $\varphi^{\rm P}$ be the associated polarization field, given by \eqref{def:varphi} for $\alpha=1$. Note that both $\widehat\psi^{\rm P}$ and  $\varphi^{\rm P}$ are real-valued since $\psi^{\rm P}$ is an even function. Then the following holds.

\begin{lemma}\label{lem1}
Let $g:\R^3 \to \R$ be a smooth function with bounded second derivative.  With $\phi_0$ the ground state of $H_0$, we have 
\begin{equation}\label{lem:eq1}
\lim_{\alpha\to \infty}  \langle \phi_0 | g(P_f/\alpha) | \phi_0 \rangle = \int_{\R^3} | \widehat\psi^{\rm P} |^2 g\,.
\end{equation}
Moreover, if in addition $g$ is bounded,
\begin{equation}
\lim_{\alpha\to \infty} \alpha^{-2} \langle \phi_0 |  \N \, g(P_f/\alpha) | \phi_0 \rangle = \int_{\R^3} (\varphi^{\rm P})^2 \, \int_{\R^3} | \widehat\psi^{\rm P} |^2 g
\end{equation}
and, for any  $\xi \in L^2(\R^3)$, 
\begin{align}\nonumber
& \lim_{\alpha\to \infty} \alpha^{-1} \langle \phi_0 |   g(P_f/\alpha) a^\dagger (\xi_\alpha) g(P_f/\alpha) | \phi_0 \rangle \\ & = \iint\nolimits_{\R^3\times \R^3} dk\, dp\, \varphi^{\rm P}(k) \xi(k) \widehat\psi^{\rm P}(p+k) g(p+k) \widehat\psi^{\rm P} (p) g(p)  \label{lem:eq3}
\end{align}
where  $\xi_\alpha(p) = \alpha^{-3/2} \xi(p/\alpha)$ and we used the notation \eqref{adnot} for $a^\dagger(\xi_\alpha)$. 
\end{lemma}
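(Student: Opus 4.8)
The plan is to evaluate all three expectations on the explicit coherent trial state and then argue that the true ground state $\phi_0$ may be substituted for it. Write $\Phi_\alpha$ for the normalized version of the ansatz \eqref{ansatz} at $P=0$, i.e. $\Phi_\alpha \propto \widehat\psi^{\rm P}_\alpha(-P_f)\, e^{a^\dagger(\varphi^{\rm P}_\alpha)}\Omega$. On $\Phi_\alpha$ everything is explicit once one combines the scaling relations $\widehat\psi^{\rm P}_\alpha(p)=\alpha^{-3/2}\widehat\psi^{\rm P}(p/\alpha)$ and $\varphi^{\rm P}_\alpha(p)=\alpha^{-1/2}\varphi^{\rm P}(p/\alpha)$ (the latter from \eqref{def:varphi} together with $\psi^{\rm P}_\alpha(x)=\alpha^{3/2}\psi^{\rm P}(\alpha x)$) with the elementary identities $a_k\,f(P_f)=f(P_f+k)\,a_k$ and $a_k\,e^{a^\dagger(\varphi^{\rm P}_\alpha)}\Omega=\varphi^{\rm P}_\alpha(k)\,e^{a^\dagger(\varphi^{\rm P}_\alpha)}\Omega$. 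The essential simplification is that the field-momentum distribution of the coherent state $e^{a^\dagger(\varphi^{\rm P}_\alpha)}\Omega$ is spread over momenta of order $\alpha^2$, hence is constant to leading order across the support of $|\widehat\psi^{\rm P}_\alpha|^2$, which lives on the scale $\alpha$. Consequently $g(P_f/\alpha)$ averages against the rescaled electron-momentum law $|\widehat\psi^{\rm P}|^2\,dp$, giving \eqref{lem:eq1}; the number operator contributes its condensate value $\int|\varphi^{\rm P}_\alpha|^2=\alpha^2\int(\varphi^{\rm P})^2$ times the same average, giving the middle identity; and in \eqref{lem:eq3} the creation operator $a^\dagger(\xi_\alpha)$, carrying momentum of order $\alpha$, shifts the argument of one factor $\widehat\psi^{\rm P}_\alpha(-P_f)$ by $k$ and contracts against $\varphi^{\rm P}_\alpha(k)$, producing after rescaling precisely the convolution structure $\widehat\psi^{\rm P}(p+k)g(p+k)\widehat\psi^{\rm P}(p)g(p)$. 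In each case a routine bookkeeping of the powers of $\alpha$ yields the stated normalization.

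To make the substitution $\phi_0\leftrightarrow\Phi_\alpha$ rigorous it is convenient, for the first identity, to work with the characteristic function: since $e^{is\cdot P_f/\alpha}$ implements a field translation by $s/\alpha$ it is a bounded unitary, so $\langle\phi_0|e^{is\cdot P_f/\alpha}|\phi_0\rangle$ is the overlap of $\phi_0$ with its field-translate. I would show this converges to $\int|\widehat\psi^{\rm P}(p)|^2 e^{is\cdot p}\,dp$, i.e. that the law of $P_f/\alpha$ under $\phi_0$ converges weakly to $|\widehat\psi^{\rm P}|^2\,dp$; a uniform bound on a moment of $P_f/\alpha$ slightly above the second then supplies the uniform integrability that upgrades weak convergence to convergence against any $g$ with bounded second derivative, which grows at most quadratically. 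The remaining two identities involve the unbounded factors $\N$ and $a^\dagger(\xi_\alpha)$, of respective sizes $\alpha^2$ and $\alpha$; here the boundedness of $g$ is used, and the comparison with $\phi_0$ reduces to controlling the field one-point function $\langle\phi_0|a_k|\phi_0\rangle$ and the two-point function $\langle\phi_0|a_k^\dagger a_l|\phi_0\rangle$, suitably correlated with $g(P_f/\alpha)$.

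The main obstacle is exactly this substitution: the statement that near-minimality of the energy forces the field one-point function and the field-momentum law of $\phi_0$ to converge, after rescaling, to the Pekar objects $\varphi^{\rm P}$ and $|\widehat\psi^{\rm P}|^2$. This condensation statement does not follow from the energy convergence \eqref{plim} alone; I would derive it from the quantitative operator bounds of \cite{LT}, which yield a coercive lower bound on $H_0-E_0$ together with a priori estimates of the form $\langle\phi_0|\N|\phi_0\rangle\le C\alpha^2$ and $\langle\phi_0|P_f^2|\phi_0\rangle\le C\alpha^2$ (the latter also following from the identity $H_0=P_f^2+\N+V$ once the limits of $\langle\N\rangle$ and $\langle V\rangle$ are known). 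Feeding these into a weak-compactness argument produces subsequential limits of the rescaled reduced densities; the coercivity forces any such limit onto the set of Pekar configurations, and the uniqueness of the Pekar minimizer \cite{Li} (with our normalization) pins the limit down, so that the whole family converges. The delicate point, and the least routine part of the argument, is to show that the error terms generated in commuting the smooth, at-most-linear observables through the Weyl-shifted state genuinely \emph{vanish} rather than merely stay bounded; this is where the explicit rate in \cite{LT} is needed, and where the only real difficulty of the lemma lies.
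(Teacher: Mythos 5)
Your route --- evaluate the three expectations on the coherent ansatz $\Phi_\alpha$ from \eqref{ansatz} and then substitute the true ground state $\phi_0$ --- is genuinely different from the paper's, and its central step is exactly the one you flag yourself: the ``condensation'' statement that the rescaled $P_f$-law and the field one- and two-point functions of $\phi_0$ converge to the Pekar objects. That step is a genuine gap, not a routine technicality, because the source you invoke does not contain it. What \cite{LT} proves is a lower bound on the ground-state \emph{energy} matching the Pekar value with an explicit rate; it contains no coercive inequality of the form $H_0-E_0\gtrsim \mathrm{dist}\bigl(\text{state},\,\text{Pekar coherent states}\bigr)^2$, and no control on reduced densities of low-energy states. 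To force your subsequential limits onto the Pekar configurations you would need a quantitative stability inequality for the Pekar functional (nondegeneracy of its Hessian modulo the translation and phase zero modes --- a nontrivial result not cited here) and a way to transport it through the localization/coherent-state scheme; energy convergence with a rate, by itself, does not control $\langle \phi_0|a_k|\phi_0\rangle$ or the law of $P_f/\alpha$. A smaller instance of the same problem: your upgrade from weak convergence to convergence against $g$ of quadratic growth requires a uniform $(2+\epsilon)$-moment bound on $P_f/\alpha$, and no source for that bound is offered ($\langle\phi_0|P_f^2|\phi_0\rangle\le C\alpha^2$ alone gives only tightness plus integrability strictly below quadratic growth).

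The paper sidesteps state convergence entirely, which is the whole point of its proof. It perturbs the Hamiltonian, $H_0(\vec\lambda)=H_0+\lambda_1\alpha^2 g_1(P_f/\alpha)+\lambda_2 g_2(P_f/\alpha)\N+\lambda_3\alpha\, g_3(P_f/\alpha)\bigl(a(\xi_\alpha)+a^\dagger(\xi_\alpha)\bigr)g_3(P_f/\alpha)$, reruns the \cite{LT} lower-bound machinery (ultraviolet cutoff, electron localization, coherent states) with the perturbations carried along to obtain $\liminf_{\alpha\to\infty}\alpha^{-2}E_0(\vec\lambda)\geq E^{\rm P}(\vec\lambda)$ as in \eqref{lem:claim}, and then runs a Feynman--Hellmann/Griffiths argument: $\vec\lambda\mapsto E_0(\vec\lambda)$ is concave, the three expectation values in the lemma are precisely the components of $\nabla E_0(\vec 0)$, and since the limiting function $E^{\rm P}(\vec\lambda)$ is differentiable at $\vec\lambda=\vec 0$ --- this is where uniqueness of the Pekar minimizer \cite{Li} enters, in a role quite different from the one you assign it --- convergence of the concave lower bounds forces existence and identification of the limits of the derivatives. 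The only analytic input is thus energy convergence, exactly what LT-type methods deliver, and no statement about the ground state itself is ever needed. To rescue your approach you would essentially have to prove the condensation result as a theorem in its own right, which is substantially harder than the lemma; as written, the proposal does not close.
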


In particular,  Lemma~\ref{lem1} states that the relevant expectation values can, in the strong coupling limit, be computed using the ansatz \eqref{ansatz} for $\phi_0$.

We shall postpone the proof of Lemma~\ref{lem1} to the end of this section, and continue by exploring its consequences. 
From \eqref{lem:eq1}, we obtain 
\begin{equation}\label{cons1}
\lim_{\alpha\to \infty} \alpha^{-1} \left \langle \phi_0 \left| P_f \cdot  t(P_f/\alpha)  \right| \phi_0 \right\rangle = \int_{\R^3} dp\,  \widehat\psi^{\rm P}(p)^2  p\cdot  t(p) \,.  
\end{equation}
We shall choose\footnote{When comparing with \eqref{compe}, note that $\nabla\widehat \psi^{\rm P}(p) = - \nabla\widehat\psi^{\rm P}(-p)$ since $\widehat\psi^{\rm P}$ is even.} 
\begin{equation}\label{def:t}
t(p) = \frac{   \nabla \widehat\psi^{\rm P}(p) }{\widehat\psi^{\rm P}(p)} \chi( \epsilon p))
\end{equation}
for some $\epsilon>0$, with $\chi$ a radial function in $C_0^\infty(\R^3)$ satisfying $\chi(0)=1$. In \cite{Li} it was shown that $\psi^{\rm P}$ is a smooth function that decays exponentially at infinity. In particular, this implies that   $\widehat\psi^{\rm P}$ and all its derivatives are bounded functions going to zero at infinity.  Moreover, from the variational principle \eqref{def:EP} it is not difficult to see that $\widehat\psi^{\rm P}$ is strictly positive. Hence the function $t$ in \eqref{def:t} is bounded and smooth for any $\epsilon>0$.  In particular, the assumptions in Lemma~\ref{lem1} are satisfied, and by combining \eqref{cons1} and \eqref{def:t}, we have
\begin{equation}
\lim_{\epsilon\to 0} \lim_{\alpha\to \infty} \alpha^{-1} \left \langle \phi_0 \left| P_f \cdot  t(P_f/\alpha)  \right| \phi_0 \right\rangle = \int_{\R^3} dp\,  \widehat\psi^{\rm P}(p)   p\cdot  \nabla \widehat\psi^{\rm P}(p) = -\frac 32
\end{equation}
where we used dominated convergence for the $\epsilon\to 0$ limit, and integrated by parts in the last step.

Theorem~\ref{main:thm} is thus proved if we can show that
\begin{equation}\label{tp}
\lim_{\epsilon\to 0} \lim_{\alpha\to \infty}  \alpha^{-2} \left\langle t(P_f/\alpha) \phi_0 \left| H_0 - E_0 \right|  t(P_f/\alpha)  \phi_0 \right\rangle = 3\,.  
\end{equation}
For the terms $P_f^2$, $\N$ and $E_0$ we can use again Lemma~\ref{lem1}, with the result that
\begin{align}\nonumber
& \lim_{\epsilon\to 0} \lim_{\alpha\to \infty}  \alpha^{-2} \left\langle t(P_f/\alpha) \phi_0 \left| P_f^2 + \N - E_0 \right|  t(P_f/\alpha)  \phi_0 \right\rangle \\ & = \int_{\R^3} dp\, |\nabla \widehat\psi^{\rm P}(p)|^2 \left( p^2 + \int_{\R^3} (\varphi^{\rm p})^2 - e^{\rm P}\right)   \label{tp1}
\end{align}
where we also used \eqref{plim}. 
In order to calculate  the expectation of  $V$, we cannot directly apply \eqref{lem:eq3} since the function $k\mapsto |k|^{-1}$ is not in $L^2(\R^3)$.  We shall introduce an ultraviolet cutoff $\Lambda>0$ and write
\begin{equation}\label{def:Lambda}
\frac 1 {|k|} =   \alpha^{-1} v( k /\alpha)  +  \frac{\theta(|k|-\Lambda \alpha)}{|k|}
\end{equation}
where $\theta$ denotes the Heaviside step function. Thus $v (k) = |k|^{-1} \theta (\Lambda - |k|)$, which is a function in $L^2(\R^3)$. After inserting the second term in \eqref{def:Lambda} into \eqref{def:V}, we can proceed as in the derivation of \cite[Eq.~(4) in Erratum]{LT} to obtain
\begin{equation}\label{214}
\pm \frac{ \sqrt{\alpha}}{\sqrt{2} \pi} \int_{|k|\geq \Lambda\alpha } dk\, \frac 1{|k|} \left(  a_k +  a_k^\dagger \right) \leq  \frac{8 \kappa}{3\pi \Lambda} P_f^2 + \frac 1 \kappa \left( \N + \frac 32\right)
\end{equation}
for any $\kappa>0$. 
Applying Lemma~\ref{lem1} and sending $\Lambda\to\infty$ followed by $\kappa\to \infty$, we conclude that 
\begin{equation}\label{tp2}
\lim_{\epsilon\to 0}\lim_{\alpha\to \infty} \left\langle t(P_f) \phi_0 \left| V \right|  t(P_f)  \phi_0 \right\rangle 
 =  - \frac{\sqrt 2}\pi \iint_{\R^3\times \R^3} dk\, dp\, \frac {\varphi^{\rm P}(k) } {|k|} \nabla\widehat\psi^{\rm P}(p+k)  \nabla\widehat\psi^{\rm P} (p) \,.
\end{equation}

The Euler-Lagrange equation satisfied by the minimizer of the Pekar functional \eqref{def:EP} reads in momentum space
\begin{equation}
(p^2 + \mu) \widehat\psi^{\rm P}(p) - \frac{ \sqrt{2}}\pi \int_{\R^3} dk\, \frac {\varphi^{\rm P}(k)}{|k|}  \widehat\psi^{\rm P}(p+k)  = 0
\end{equation}
with $\mu =  \int_{\R^3} (\varphi^{\rm P})^2 - e^{\rm P} $. Taking a derivative with respect to $p$, this becomes
\begin{equation}
(p^2 + \mu) \nabla \widehat\psi^{\rm P}(p) - \frac{\sqrt{2}}\pi \int_{\R^3} dk\, \frac {\varphi^{\rm P}(k)}{|k|}  \nabla \widehat\psi^{\rm P}(p+k)  = -2 p \widehat\psi^{\rm P}(p) \,.
\end{equation}
In particular, multiplying this equation by $\nabla \psi^{\rm P}(p)$ and integrating, we conclude that 
\begin{align}\nonumber
 & \int_{\R^3} dp\,  |\nabla \widehat\psi^{\rm P}(p)|^2  \left( p^2  + \mu \right) - \frac{\sqrt{2}} \pi \iint\nolimits_{\R^3\times \R^3} dp\, dk\,  \frac {\varphi^{\rm P}(k) } {|k|} \nabla\widehat\psi^{\rm P}(p+k)  \nabla\widehat\psi^{\rm P} (p) \\ & = - 2 \int_{\R^3} dp\,  \widehat\psi^{\rm P}(p)  p \cdot \nabla   \widehat\psi^{\rm P}(p) = 3 \,.
\end{align}
In combination with \eqref{tp1} and \eqref{tp2},  the identity \eqref{tp} follows, and consequently also the statement of  Theorem~\ref{main:thm}.

We are left with the 

\begin{proof}[Proof of Lemma~\ref{lem1}]
The key idea in the proof of Lemma~\ref{lem1} is to reintroduce the electron coordinate, and to  redo the proof of the strong coupling limit in  \cite{LT} with suitable perturbation terms. In fact, for  $\vec\lambda =(\lambda_1,\lambda_2,\lambda_3) \in \R^3$, we shall derive a lower bound on 
\begin{equation}
E_0(\vec\lambda) = \infspec  H_0 (\vec \lambda) 
\end{equation}
where $H_0(\vec\lambda)$ denotes the perturbed Hamiltonian 
\begin{align}\nonumber
H_0(\vec\lambda) &= H_0 + \lambda_1 \alpha^2 g_1(P_f/\alpha) + \lambda_2 g_2(P_f/\alpha)\N \\ & \quad  + \lambda_3 \alpha g_3(P_f/\alpha) \left( a(\xi_\alpha) + a^\dagger(\xi_\alpha)\right) g_3(P_f/\alpha)
\end{align}
for smooth, real-valued functions $g_i$, $1\leq i\leq 3$. We assume that the $g_i$ have bounded second derivative, and in addition that $g_2$ and $g_3$ are bounded. Under these assumptions, the perturbation terms are relatively form-bounded with respect to $H_0$, and hence $E_0(\vec\lambda)$ is finite for $|\vec\lambda|$ small enough. Moreover, since $E_0$ is a simple eigenvalue of $H_0$ that is isolated from the rest of the spectrum \cite{Mo}, $E_0(\vec\lambda)$ is differentiable for small $|\vec\lambda|$. 

We shall prove that as long as $|\lambda_2| \|g_2\|_\infty <1$, 
\begin{equation}\label{lem:claim}
\liminf_{\alpha\to \infty} \alpha^{-2} E_0(\vec\lambda) \geq E^{\rm P}(\vec\lambda) 
\end{equation}
where $E^{\rm P}(\vec\lambda)$ is the infimum of the perturbed Pekar functional 
\begin{align}\nonumber
  \mathcal{E}^{\rm P}(\psi,\varphi) & + \lambda_1 \int_{\R^3} g_1 |\widehat\psi|^2 + \lambda_2 \int_{\R^3} |\varphi|^2 \int_{\R^3} g_2 |\widehat\psi|^2  \\  &    + 2 \lambda_3 \Re  \iint_{\R^3\times\R^3} dk\, dp\,  \varphi (k) \xi(k) \widehat\psi^*(p+k) g_3(p+k) \widehat\psi (p) g_3(p)   
 \end{align}
 (subject to the normalization condition $ \int_{\R^3} |\psi|^2 =1$), with 
 $\mathcal{E}^{\rm P}(\psi,\varphi) $ denoting the Pekar functional
\begin{equation}
\mathcal{E}^{\rm P}(\psi,\varphi)  = \int_{\R^3} dx\, |\nabla\psi(x)|^2  - \frac{ \sqrt{2\alpha}} \pi \Re  \int_{\R^3} dk \, \frac{\varphi(k)}{|k|}  \int_{\R^3} dx\, |\psi(x)|^2 e^{ik\cdot x} + \int_{\R^3} dp\,  |\varphi(p)|^2  \,.
\end{equation}
We note that also $E^{\rm P}(\vec \lambda)$ is finite for $|\vec\lambda|$ small enough. Moreover, the uniqueness of minimizers of $\mathcal {E}^{\rm P}$ (up to translations and multiplication by a complex phase) implies that $E^{\rm P}(\vec \lambda)$ is differentiable at $\vec\lambda = \vec 0$.

The derivative of $E_0(\vec \lambda)$ at $\vec\lambda=\vec 0$ equals
\begin{align}\nonumber
\vec\lambda \cdot \nabla E_0(0) & =  \lambda_1 \alpha^2  \langle \phi_0 | g_1(P_f^2 /\alpha) |\phi_0\rangle  + \lambda_2  \langle \phi_0 | \N g_2(P_f^2 /\alpha) |\phi_0\rangle \\ & \quad + 2 \lambda_3 \alpha\Re
 \langle \phi_0 |   g_3(P_f/\alpha) a^\dagger(\xi_\alpha) g_3(P_f/\alpha) | \phi_0 \rangle \,.
 \end{align}
Moreover, from the concavity of $E_0(\vec\lambda)$ we have 
\begin{equation}
 \vec\lambda \cdot \nabla E_0(0) \geq E_0(\vec\lambda) - E_0
\end{equation}
and hence \eqref{lem:claim} implies that 
\begin{align}\nonumber
& \liminf_{\alpha\to \infty} \left[  \lambda_1 \ \langle \phi_0 | g_1(P_f^2 /\alpha) |\phi_0\rangle  + \lambda_2 \alpha^{-2}  \langle \phi_0 | \N g_2(P_f^2 /\alpha) |\phi_0\rangle \right. \\ \nonumber & \qquad\quad  \left. + 2 \lambda_3  \alpha^{-1} \Re
 \langle \phi_0 |   g_3(P_f/\alpha) a^\dagger(\xi_\alpha) g_3(P_f/\alpha) | \phi_0 \rangle \right]
\\ &   \geq E^{\rm P}(\vec\lambda) - E^{\rm P}(\vec 0)\,, \label{bsi}
\end{align}
where we have used \eqref{plim} and the fact that $E^{\rm P}(\vec 0) = e^{\rm P}$. 
Both sides of \eqref{bsi} are concave function of $\vec\lambda$ that vanish at $\vec\lambda=\vec 0$. Since the right side is differentiable at $\vec\lambda=\vec 0$, the same  holds for the left side, and the two derivatives agree. We conclude that the limits $\alpha\to\infty$ of the various terms actually exist, and satisfy
\begin{align}\nonumber
& \lambda_1 \lim_{\alpha\to\infty} \langle \phi_0 | g_1(P_f^2 /\alpha) |\phi_0\rangle  + \lambda_2  \lim_{\alpha\to\infty} \alpha^{-2}  \langle \phi_0 | \N g_2(P_f^2 /\alpha) |\phi_0\rangle \\ \nonumber & + 2 \lambda_3  \lim_{\alpha\to\infty} \alpha^{-1} \Re
 \langle \phi_0 |   g_3(P_f/\alpha) a^\dagger(\xi_\alpha) g_3(P_f/\alpha) | \phi_0 \rangle 
\\ &  = \vec\lambda \cdot \nabla E^{\rm P}(\vec 0) \,.
\end{align} 
In particular, 
\begin{equation}
  \lim_{\alpha\to \infty}   \langle \phi_0 | g_1(P_f /\alpha) |\phi_0\rangle   = \nabla_{\lambda_1}{E^{\rm P}}(\vec 0) =  \int_{\R^3} g_1  |\widehat\psi^{\rm P}|^2
\end{equation}
\begin{equation}
  \lim_{\alpha\to \infty} \alpha^{-2}  \langle \phi_0 | \N\, g_2(P_f /\alpha) |\phi_0\rangle   = \nabla_{\lambda_2}{E^{\rm P}}(\vec 0) =  \int_{\R^3} (\varphi^{\rm P})^2 \int_{\R^3} g_2  |\widehat\psi^{\rm P}|^2
\end{equation}
and
\begin{align}\nonumber
 & \lim_{\alpha\to\infty} \alpha^{-1}  \Re
  \langle \phi_0 |   g_3(P_f/\alpha) a^\dagger(\xi_\alpha) g_3(P_f/\alpha) | \phi_0 \rangle  \\ &=  \frac 12 \nabla_{\lambda_3}{E^{\rm P}}(\vec 0) = \Re \iint_{\R^3\times\R^3} dk\,dp\,  \varphi^{\rm P}(k) \xi(k) \psi^{\rm P}(p+k) g_3(p+k) \psi^{\rm P} (p) g_3(p)  \,.
\end{align}
By linearity in $\xi$, the corresponding identity for the imaginary part follows by replacing $\xi$ by $i \xi$. Hence the desired statements \eqref{lem:eq1}--\eqref{lem:eq3} are proved.

It remains to derive the claimed lower bound \eqref{lem:claim} on $E_0(\vec\lambda)$. We note that $H_0 (\vec \lambda)$ is the restriction to total momentum equal to zero of the translation-invariant operator
\begin{align}\nonumber
\mathfrak{h}_{\vec\lambda} &= - \Delta  - \frac{ \sqrt{\alpha}}{\sqrt{2}\pi} \int_{\R^3} dk\, \frac 1{|k|} \left( e^{i k x} a_k + e^{-ikx} a_k^\dagger \right) + \N
\\ \nonumber & \quad + \lambda_1\alpha^2 g_1 ( -i\alpha^{-1} \nabla) +  \lambda_2\N \, g_2 ( -i\alpha^{-1} \nabla) \\ & \quad + \lambda_3 \alpha g_3(-i \alpha^{-1} \nabla) \int_{\R^3} dk \left( e^{ikx} \xi^*_\alpha(k) a_k + e^{-ikx} \xi_\alpha(k) a^\dagger_k \right) g_3(-i\alpha^{-1} \nabla)
\end{align}
acting on $L^2(\R^3) \otimes \mathcal{F}$.  In particular, $E_0(\vec\lambda) \geq \infspec \mathfrak{h}_{\vec\lambda}$. 

To derive a lower bound on $\infspec \mathfrak{h}_{\vec\lambda}$, we proceed as in \cite{LT}. The first step is to introduce an ultraviolet cutoff in the interaction $V$.  Similarly to \eqref{214} above, we have
\begin{equation}
 \frac{ \sqrt{\alpha} }{\sqrt{2}\pi} \int_{|k|\geq \Lambda\alpha} dk\, \frac 1{|k|} \left( e^{i k x} a_k + e^{-ikx} a_k^\dagger \right) \leq - \frac{8\kappa}{3\pi \Lambda} \Delta + \frac 1 \kappa \left(\int_{|k|\geq \Lambda \alpha} dk\, a^\dagger_k a_k  + \frac 32\right)
\end{equation}
for $\kappa > 0$. This was proved in   \cite[Eq.~(4) in Erratum]{LT} (where $\kappa=1$ was chosen). 
Hence we can introduce an ultraviolet cutoff $\Lambda\alpha$ on the phonon modes, with small error  as long as $\Lambda \gg 1$. For the last term in $\mathfrak{h}_{\vec\lambda}$ multiplying $\lambda_3$, we simply use 
\begin{align}\nonumber 
& \pm \int_{|k|\geq \Lambda\alpha} dk \left( e^{ikx} \xi^*_\alpha(k) a_k + e^{-ikx} \xi_\alpha(k) a^\dagger_k \right) \\ \nonumber & \leq 2 \left( \int_{|k|\geq \Lambda} dk\, |\xi(k)|^2\right)^{1/2} \left(\int_{|k|\geq \Lambda \alpha} dk\, a^\dagger_k a_k  + \frac 12\right) ^{1/2}
\\& \leq \frac \alpha \epsilon  \int_{|k|\geq \Lambda} dk\, |\xi(k)|^2 + \frac \epsilon \alpha \left( \int_{|k|\geq \Lambda \alpha} dk\, a^\dagger_k a_k  + \frac 12\right) 
\end{align}
for any $\epsilon>0$. Again this term only introduces a small error if $\Lambda$ is large.

In particular, if we choose $\kappa$ and $\epsilon$ such that 
\begin{equation}
\frac 1 \kappa + |\lambda_2|  \|g_2\|_\infty + \epsilon |\lambda_3| \|g_3\|_\infty^2 \leq 1   
\end{equation}
we have
\begin{equation}
\infspec \mathfrak{h}_{\vec\lambda} \geq \infspec \mathfrak{h}_{\vec\lambda}^{(1)} - \frac 3{2\kappa} - |\lambda_3| \alpha^2 \|g_3\|_\infty^2 \epsilon^{-1}   \int_{|k|\geq \Lambda} dk\, |\xi(k)|^2
\end{equation}
where
\begin{align}\nonumber
\mathfrak{h}_{\vec\lambda}^{(1)} & = - \left( 1 - \frac{8\kappa}{3\pi \Lambda}\right)  \Delta   - \frac{ \sqrt{\alpha}}{\sqrt{2}\pi}  \int_{|k|\leq \Lambda\alpha} dk\, \frac 1{|k|} \left( e^{i k x} a_k + e^{-ikx} a_k^\dagger \right) + \N  \\ \nonumber &  \quad + \lambda_1\alpha^2 g_1 ( -i\alpha^{-1} \nabla)  +  \lambda_2\N\, g_2 ( -i\alpha^{-1} \nabla)  \\ & \quad + \lambda_3 \alpha g_3(-i \alpha^{-1} \nabla) \int_{|k|\leq \Lambda\alpha} dk \left( e^{ikx} \xi^*_\alpha(k) a_k + e^{-ikx} \xi_\alpha(k) a^\dagger_k \right) g_3(-i\alpha^{-1} \nabla) \,.
\end{align}
Here $\N$ stands now for the number of phonons with momenta $|k|\leq \Lambda\alpha$. Equivalently, $\N$ could be taken to be the total particle number, without effecting the ground state energy of $\mathfrak{h}_{\vec\lambda}^{(1)}$, since $|\lambda_2|\|g_2\|_\infty<1$ by assumption, and hence occupying phonon modes with $|k|>\Lambda\alpha$ raises the energy.

Next we shall localize the electron. With $\phi\in H^1(\R^3)$ a real-valued function of compact support, normalized such that $\int_{\R^3} \phi^2 =1$, let $\phi_y(x) = \phi(x-y)$. For any  $\Psi \in L^2(\R^3) \otimes \mathcal F$ of finite energy, we compute
\begin{align}\nonumber
&\int_{\R^3} dy \left\langle \phi_y \Psi \left| \mathfrak{h}_{\vec\lambda}^{(1)} \right| \phi_y \Psi \right\rangle  \\ \nonumber &= \left\langle  \Psi \left| \mathfrak{h}_{\vec\lambda}^{(1)} \right|  \Psi \right\rangle +\left( 1 - \frac{8\kappa}{3\pi \Lambda}\right)  \| \Psi \|^2\int_{\R^3} |\nabla\phi|^2 \\ \nonumber &\quad + \lambda_1 \alpha^2 \iint_{\R^3\times \R^3} dp\, dq\left(  g_1(p/\alpha) - g_1(q/\alpha) \right) |\widehat \phi(p-q)|^2 \| \widehat\Psi(q)\|_{\mathcal F}^2\\ \nonumber &\quad + \lambda_2  \iint_{\R^3\times \R^3} dp\, dq\left(  g_2(p/\alpha) - g_2(q/\alpha) \right) |\widehat \phi(p-q)|^2 \| \N^{1/2}\widehat\Psi(q)\|_{\mathcal F}^2\\ \nonumber &\quad + 2 \lambda_3 \alpha \Re 
\iint_{\R^3\times \R^3} dp\, dq\, |\widehat \phi(p-q)|^2 \int_{|k|\leq \Lambda\alpha} dk \,  \xi_\alpha(k)  \left\langle \widehat\Psi(q) \right| a^\dagger_k \left|  \hat\Psi(q) \right\rangle_{\mathcal{F}}   \\ & \qquad\qquad\qquad\quad  \times  \left( g_3(p/\alpha)g_3((p+k)/\alpha) - g_3(q/\alpha) g_3((q+k)/\alpha)\right)  \,. \label{tbp}
\end{align}
Here $\widehat \Psi(q) \in \F$ denotes the Fock space vector obtained by fixing the electron momentum of $\Psi$ to be $q$. 
By assumption the functions $g_i$  have  bounded second derivatives. Therefore, 
\begin{equation}
\left| g_i(p/\alpha) - g_i(q/\alpha) - \alpha^{-1} \nabla g_i(q/\alpha) \cdot (p-q) \right| \leq  C_i \alpha^{-2} |p-q|^2
\end{equation}
for suitable constants $C_i >0$. Moreover, since $g_3$ is in addition assumed to be bounded, we also have
\begin{align}\nonumber
& \left|  g_3(p/\alpha)g_3((p+k)/\alpha) - g_3(q/\alpha) g_3((q+k)/\alpha) \right. \\  & \left.  -  \alpha^{-1} \left[ \nabla g_3(q/\alpha)g_3((q+k)/\alpha) +  g_3(q/\alpha)\nabla g_3((q+k)/\alpha)\right] \cdot (p-q) \right| \leq  C_3 \alpha^{-2} |p-q|^2
\end{align}
for some constant $C_3>0$ independent of $k$. We plug these bounds into \eqref{tbp}, and use that $\int_{\R^3} dp\, p |\widehat \phi(p)|^2 = 0$,  $\int_{\R^3} dq\, \|\widehat \Psi(q)\|_{\mathcal{F}}^2 = 1$ as well as
\begin{equation}
\int_{\R^3} dq  \int_{|k|\leq \Lambda\alpha} dk \,  |\xi_\alpha(k)|  \left| \left\langle \widehat\Psi(q) \right| a^\dagger_k \left|  \hat\Psi(q) \right\rangle_{\mathcal{F}} \right| \leq   \|\xi\|_2 \left\langle\Psi \left| \sqrt{\N + \tfrac 12} \right| \Psi \right\rangle\,.
\end{equation}
This way we obtain the bound
\begin{align}\nonumber
\int_{\R^3} dy \left\langle \phi_y \Psi \left| \mathfrak{h}_\lambda^{(1)} \right| \phi_y \Psi \right\rangle &  \leq  \left\langle  \Psi \left| \mathfrak{h}_\lambda^{(1)} \right|  \Psi \right\rangle +\left( 1 - \frac{8\kappa}{3\pi \Lambda} + C_1 |\lambda_1| \right)\| \Psi \|^2 \int_{\R^3} |\nabla\phi|^2 \\ \nonumber & \quad + C_2 |\lambda_2| \alpha^{-2} \| \N^{1/2} \Psi\|^2\int_{\R^3} |\nabla\phi|^2 \\ & \quad +  2C_3 |\lambda_3| \alpha^{-1}    \|\xi\|_2 \left\langle\Psi \left| \sqrt{\N + \tfrac 12} \right| \Psi \right\rangle \int_{\R^3} |\nabla\phi|^2 \,.
\end{align}
Since 
\begin{equation}
\int_{\R^3} dy \left\|  \phi_y \Psi  \right\|^2 = \| \Psi \|^2 
\end{equation}
holds for any $\Psi\in L^2(\R^3)\otimes\mathcal{F}$, 
we can find, for any given $\Psi$, a $y\in \R^3$ such that 
\begin{equation}
 \left\|  \phi_y \Psi  \right\|^{-2}   \left\langle \phi_y \Psi \left| \mathfrak{h}_\lambda^{(2)} \right| \phi_y \Psi \right\rangle  \leq   \left\|  \Psi  \right\|^{-2}  \left\langle  \Psi \left| \mathfrak{h}_\lambda^{(1)} \right|  \Psi \right\rangle 
\end{equation}
where
\begin{align}\nonumber
\mathfrak{h}_{\vec\lambda}^{(2)} &= \mathfrak{h}_{\vec\lambda}^{(1)} -  \left( 1 - \frac{8\kappa}{3\pi \Lambda} + C_1 |\lambda_1| \right) \int_{\R^3} |\nabla\phi|^2  - C_2 |\lambda_2| \alpha^{-2} \N  \int_{\R^3} |\nabla\phi|^2 \\ & \quad -  2 C_3 |\lambda_3| \alpha^{-1}  \|\xi\|_2 \sqrt{\N + \tfrac 12}  \int_{\R^3} |\nabla\phi|^2\,. \label{h2}
\end{align}
In particular, to obtain a lower bound on the ground state energy of $\mathfrak{h}_{\vec\lambda}^{(1)} $,  we can minimize the expectation value of $\mathfrak{h}_{\vec\lambda}^{(2)} $ over functions $\Psi$ with electron coordinate supported in a ball of radius $R$. By translation invariance, we may assume without loss of generality that this ball is centered at the origin.
The relative error in the energy coming from the additional terms in \eqref{h2} is of the order $\int_{\R^3} |\nabla\phi|^2 \sim R^{-2}$, which is much less than $ \alpha^2$ if we choose $R\gg \alpha^{-1}$. 

The remainder of the proof is now identical to \cite{LT}, and we will skip the details. With both an ultraviolet cutoff (for the phonon momenta) and a space cutoff (for the electron) in place, one can  approximate the interaction terms with finitely many modes, and use coherent states to compare the Hamiltonian to the corresponding classical problem, yielding the Pekar energy. 
This yields \eqref{lem:claim}, and hence completes the proof of Lemma~\ref{lem1}. 
\end{proof}

\bigskip
\noindent {\bf Acknowledgments.} 
Financial support through the European Research Council (ERC) under the European Union's Horizon 2020 research and innovation programme (grant agreement No 694227; R.S.)  is gratefully acknowledged.

\end{document}